\theoremstyle{plain}
\newtheorem{definition}{Definition}
\newtheorem{lemma}[definition]{Lemma}
\newtheorem{theorem}[definition]{Theorem}
\newcommand{\cL}{\mathcal{L}}
\newcommand{\cQ}{\mathcal{Q}}
\newcommand{\bigo}{\mathcal{O}}
\newcommand{\inE}[2]{E^{+}_{#2}(#1)}
\newcommand{\outE}[2]{E^{-}_{#2}(#1)}
\title{\textbf{Low-degree spanning trees of $2$-edge-connected graphs in linear time}\thanks{This work was supported by Intel Corporation.}}
\author[1]{Dariusz Dereniowski\thanks{The author has been partially supported by National Science Centre (Poland) grant number 2018/31/B/ST6/00820.}}
\author[2]{Janusz Dybizbański}
\author[3]{Przemysław Karpiński}
\author[2]{~\\ Michał Zakrzewski}
\author[2]{Paweł Żyliński}
\affil[1]{Gdańsk University of Technology, 80-233 Gdańsk, Poland}
\affil[2]{University of Gdańsk, 80-309 Gdańsk, Poland}
\affil[3]{Intel Technology Poland Ltd., 80-298 Gdańsk, Poland}
\date{}
\begin{document}

\maketitle

\begin{abstract}
\noindent We present a simple linear-time algorithm that finds a spanning tree $T$ of a given $2$-edge-connected graph $G$ such that each vertex $v$
of $T$ has degree at most $\lceil \frac{\deg_G(v)}{2}\rceil + 1$. 
\end{abstract}

\section{Problem statement and our result}

In 2015, Hasanvand~\cite{H15} proved that every $k$-edge-connected graph $G=(V_G,E_G)$, $k \ge 1$, has a spanning tree $T$ such that for each vertex $v \in V_T$, we have 
\[\deg_T(v) \le \left\lceil \frac{\deg_G(v)}{k}\right\rceil + 1,\] 
and this bound is tight. The particular case for $k=2$ had been already solved much earlier, independently, by several authors, see for example~\cite{BJTY03,LX98,T08}, leading to polynomial-time algorithms for determining a relevant spanning tree, in particular, to the one of (inexplicit) $\mathcal{O}(nm)$-time complexity~\cite{BJTY03} or the one of (explicit) $\mathcal{O}(m^2)$-time complexity~\cite{LX98}, where $n=|V_G|$ and $m=|E_G|$.

To the best of our knowledge, the question of the existence of a linear-time algorithm for the $k=2$ problem in $2$-edge-connected graphs remains open, and herein, we answer this question affirmatively, that is, we prove the following theorem.

\begin{theorem} 
\label{thm:algorithm}
There exists a $\mathcal{O}(m)$-time algorithm that for any $2$-edge-connected graph $G=(V_G,E_G)$ finds its spanning tree $T$ such that for each vertex $v \in V_G$, it holds
\[
\deg_T(v) \le \left\lceil \frac{\deg_G(v)}{2}\right\rceil + 1.
\]
\end{theorem}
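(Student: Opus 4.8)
The plan is to build the spanning tree by a single depth-first search (DFS) traversal of $G$, processing edges in a way that controls how many tree edges get attached at each vertex. Recall the basic DFS structure: starting from an arbitrary root $r$, we maintain a stack of ``active'' vertices; when we explore from the current vertex $v$ we either descend along an edge to a fresh vertex (a tree edge) or we encounter a back edge to an already-visited vertex, which we discard. The key observation making the degree bound plausible is that in a $2$-edge-connected graph every vertex $v \ne r$ has at least two edges, and at least one of them — the edge along which $v$ was first reached — can be ``paid for'' without charging $v$'s budget if we are careful; more importantly, $2$-edge-connectivity guarantees that from every vertex below the root there is a back edge escaping its subtree, which is exactly what lets us avoid building long induced paths that would force high degree elsewhere.

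The core of the argument is an \emph{amortized charging scheme}. For each vertex $v$, the tree edges incident to $v$ are: possibly one edge to its DFS-parent, and one edge to each DFS-child. I would bound $\deg_T(v)$ by bounding the number of children of $v$ in $T$. The plan is to show that we can always order the adjacency lists (or equivalently, choose which unvisited neighbor to descend to next) so that whenever $v$ acquires a new child $w$, we can ``charge'' this event to a distinct pair of edges of $G$ incident to $v$ that are not reused for any other child — roughly, each child consumes two of $v$'s non-parent edges, except possibly one child which consumes only one, yielding at most $\lceil (\deg_G(v)-1)/2\rceil + 1 \le \lceil \deg_G(v)/2 \rceil + 1$ tree edges at $v$. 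Concretely, when we are at $v$ and about to descend to a new child $w$, we should prefer to descend along an edge $vw$ such that $v$ has \emph{another} unexplored edge to the already-visited part; the back edge witnessing $2$-edge-connectivity of the subtree rooted at $w$ will eventually ``return'' and can be paired with $vw$. Making this pairing precise — so that each edge of $G$ is charged at most once and the accounting is genuinely local to $v$ — is where the real work lies.

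The main obstacle I anticipate is handling the interaction between subtrees: a back edge from deep inside the subtree of child $w_1$ might land on $v$ itself, or on an ancestor of $v$, or on a different child's subtree, and the charging must remain consistent across all these cases. In particular the ``one free child'' slack (the ``$+1$'' in the bound, together with the fact that we take a ceiling) has to absorb exactly one unpaired edge per vertex, so the scheme must never need two free edges at the same vertex. I expect the proof to proceed by maintaining an invariant during the DFS: something like ``at every moment, for every active vertex $v$ on the stack, the number of children already committed to $v$ is at most $\lceil e_v/2\rceil$ where $e_v$ counts $v$'s edges already 'used up' (explored), and $v$ still retains at least one escaping back edge in reserve.'' Establishing and maintaining this invariant — especially at the moment a subtree is completed and we pop back up, transferring the ``reserve'' back edge to the parent — is the delicate step; once it is in place, the degree bound and the $\bigo(m)$ running time (one pass over all edges, constant work per edge, with the adjacency-list reordering done in a single bucket sort) follow immediately.
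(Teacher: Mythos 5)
There is a genuine gap, and it is not just the ``delicate step'' you flag: the object your plan is trying to produce need not exist. Your tree is the DFS tree itself (descent edges kept, back edges discarded), and you hope that a clever choice of root and adjacency ordering, justified by a charging argument, bounds the number of DFS-children of each vertex. But for some $2$-edge-connected graphs \emph{no} DFS tree satisfies the bound, for any root and any ordering. Take $G=K_{2,k}$ with sides $\{v,r\}$ and $\{w_1,\dots,w_k\}$, $k\ge 7$. In any spanning tree, the $v$--$r$ path is $v\,w_i\,r$ for exactly one $i$, and every other $w_j$ is a leaf hanging off $v$ or off $r$. Now use the standard fact that in a DFS tree of an undirected graph every non-tree edge joins a vertex to one of its ancestors (no cross edges): if a non-root leaf $w_j$ hangs off $v$, its non-tree edge $\{w_j,r\}$ forces $r$ to be an ancestor of $v$; symmetrically, a non-root leaf hanging off $r$ forces $v$ to be an ancestor of $r$. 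Since these cannot both hold and at most one $w_j$ is the root, all but at most one of the $k-1$ leaves hang off the same vertex, which therefore has tree degree at least $k-1 > \lceil k/2\rceil+1$. So no charging scheme, however the pairing of back edges with descents is organized, can rescue the plan: the target bound is unattainable for DFS trees of this graph, even though $K_{2,k}$ is $2$-edge-connected (indeed $2$-connected).

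This is precisely why the paper does not output a DFS tree. Your pairing intuition (each arrival at a vertex that still has untraversed incident edges is matched with a departure) is essentially Lemma~\ref{lem:DFS-balanced}, but it is applied to an \emph{edge} DFS: a traversal that uses every edge exactly once and freely re-enters already visited vertices, so it merely defines an almost balanced orientation of $E_G$ ($|\inE{u}{\cL}|\le|\outE{u}{\cL}|$ for even $\deg_G(u)$, $+1$ for odd). The spanning tree is then built in a separate greedy phase, starting from an arbitrary vertex and repeatedly attaching a vertex $u\notin V_T$ by an edge oriented from $u$ into $T$; consequently $u$ can contribute at most one \emph{outgoing} edge to $T$ (the one used when $u$ itself is attached) plus its incoming edges, giving $\deg_T(u)\le|\inE{u}{\cL}|+1\le\lceil\deg_G(u)/2\rceil+1$, and $2$-edge-connectivity is used (Lemma~\ref{lem:finalT}) only to show this greedy growth never gets stuck. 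On $K_{2,k}$ this second phase distributes the leaves between $v$ and $r$ according to the orientation, which a DFS tree structurally cannot do. To repair your write-up you would have to decouple the traversal from the tree in the same way (or follow the balanced-orientation route of~\cite{BJTY03}); as it stands, the approach fails.
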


\noindent We emphasize that   
Czumaj and Strothmann~\cite{CzS97}   proposed an $\bigo(m + n^{3/2})$-time algorithm for that
problem, $n=|V_G|$, but in the class of $2$-connected graphs (a subclass of $2$-edge-connected graphs); its randomized version achieves running time of $\bigo(m + n\log^4n)$, with high probability. Therefore, our algorithm significantly improves their result as well. We note in passing, that for an arbitrary integer $k \ge 3$, Czumaj and Strothmann~\cite{CzS97} also gave an $\mathcal{O}(n^2 \cdot k \cdot \alpha(n, n) \cdot \log n)$-time algorithm for finding a maximum degree-$\Delta_T$ spanning tree of any $k$-connected graph of maximum degree bounded by $k(\Delta_T- 2) + 2$.

\paragraph{Paper organization.} The bound in Theorem~\ref{thm:algorithm} is proven in Section~\ref{sec:degree}, the correctness is due to Lemma~\ref{lem:finalT} and the running time is estimated in Lemma~\ref{lem:time}.
The algorithm itself is listed in Section~\ref{sec:algorithm} as Algorithm~\ref{alg:mst-algorithm}.
The graph theory terminology not presented here can be found for example in~\cite{Chartrand}.

\paragraph{Background.} Conceptually, we follow the idea of~\cite{BJTY03}: we construct a strongly connected balanced orientation $D$ of a $2$-edge-connected input graph $G$ (ref.~our Lemma~\ref{lem:DFS-balanced}) and output an out-branching of $D$. The main difference is how the orientation $D$ is constructed. In our case, we use a simple edge DFS-like approach, with a single pass, while the authors in~\cite{BJTY03} exploit an iterative -- path by path -- approach.\footnote{Our approach can be easily converted to the latter approach, with the same linear running time.}

\section{Preliminaries} \label{sec:preliminaries}
Let $G=(V_G,E_G)$ be a connected graph. Let $N_G(v)$ denote the set of neighbors of a vertex $v$ in $G$, and let $E_G(v)$ be the set of edges incident to $v$ in $G$.
We say that a~list $\cL=(a_1,\ldots,a_l)$ is an \emph{edge DFS} if the following conditions hold.
\begin{enumerate}[label={\normalfont(\roman*)}]
 \item\label{it:edge-DFS1} $a_j=(u_j,v_{j})$, where $e_j=\{u_j,v_{j}\}\in E_G$ for each $j\in\{1,\ldots,l\}$.
 
 \item\label{it:edge-DFS2}
 $E_G=\{e_1,\ldots,e_l\}$.
 
 \item\label{it:edge-DFS3} 
 Define for any $v\in V_G$ and $j\in\{1,\ldots,l\}$, $X_j(v)=E_G(v) \setminus\{e_1,\ldots,e_j\}$. (Informally, the $X_j(v)$ is the set of edges incident to $v$ and not traversed by the prefix of $\cL$ of length $j$).
 Let $j\in\{1,\ldots,l-1\}$.  
\begin{itemize}
\item    If $X(v_j)\neq\emptyset$, then  $e_{j+1}\in X(v_j)$, where $u_{j+1}=v_j$.
\item If $X(v_j)=\emptyset$, then there exists $1\leq j'<j$ for which $X(v_{j'})\neq\emptyset$.
 For the maximum such $j'$ it holds $e_{j+1}\in X(v_{j'})$, where $u_{j+1}=v_{j'}$.
\end{itemize} 
\end{enumerate}
Intuitively, Condition~\ref{it:edge-DFS1} restricts the list $\cL$ to the edges of $G$, while Condition~\ref{it:edge-DFS2} says that it `covers' all edges of $G$.
Condition \ref{it:edge-DFS3} makes $\cL$ a DFS-like traversal.
In particular, if $\cL$ arrives at $v_j$ that has some incident untraversed edge(s), then one of those edges serves as $a_{j+1}$.
Otherwise, $\cL$ backtracks to the last vertex $v_{j'}$ that has untraversed edges and traverses one of them.
Condition~\ref{it:edge-DFS2} adds here that $X_l(v)=\emptyset$ for each $v\in V_G$.

We point out that we encode the traversals as a sequence of directed pairs since this notation will be more natural in our proofs.
In particular, we heavily use that each $a_i$ provides not only the next traversed edge but also encodes the direction of the traversal.
In what follows, to shorten the notation, we will write $e(a_i)$ to refer to $e_i$ in \ref{it:edge-DFS1}.

\section{The algorithm} \label{sec:algorithm}
Our algorithm is given under the name \ref{alg:mst-algorithm} below.

\begin{figure}[htb!]
\begin{center}
\begin{minipage}{.95\linewidth}
\begin{algorithm}[H]
   \SetAlgoRefName{\textup{\tt Low-Degree-ST}}
   \caption{\phantom{xxxxxxxxxxxxxxxxxxxxxxxxxxxxxxxxxxxxxxxxxxxxxxxxx} (input: a graph $G$; output: a spanning tree $T$ of~$G$)}
   \label{alg:mst-algorithm}

   Compute an edge DFS $\cL$ of $G$. \label{ln:ST:compute-L}\;
   Let $T$ be an initial tree that consists of one arbitrary vertex \label{ln:ST:init-T} \;
   \While{$V_G\neq V_T$}{
      Take any $(u,v) \in \cL$ such that $u\notin V_T$ and $v\in V_T$ \label{ln:ST:pick-edge} \;
      Add the edge $\{u,v\}$ to $T$ \label{ln:ST:extend-T}
   }
   \Return $T$
\end{algorithm}
\end{minipage}
\end{center}
\end{figure}

\section{The complexity and correctness} \label{sec:complexity}

Let $\cL=((u_1,v_1),\ldots,(u_l,v_l))$ be an edge DFS.
We say that $\cL$ \emph{crosses} $v_{i}$ when $v_i=u_{i+1}$, $i\in\{1,\ldots,l-1\}$, otherwise $\cL$ \emph{backtracks to} $u_{i+1}$, or simply \emph{backtracks} when the vertex is not important.
Whenever $\cL$ backtracks to $u_{i+1}$, $v_i$ is called \emph{final in} $\cL$.
The vertex $u_1$ is called the \emph{starting vertex}.

For the purpose of analysis, we denote by $T_j$'s all intermediate trees obtained during the execution of the algorithm.
Namely, $T_0$ is the initial tree from line~\ref{ln:ST:init-T}, and if $T_j$ is the tree $T$ 
at the beginning of an iteration of the \texttt{while} loop in Algorithm~\ref{alg:mst-algorithm}, then $T_{j+1}$ is the tree $T$ at the end of this iteration.

Note that we suggested in the above definition that $T_j$'s are trees.
This follows from the fact that whenever the algorithm adds an edge $\{u,v\}$ to $T$ in line~\ref{ln:ST:extend-T}, it checks that $u$ does not belong to $T$ yet, and $v$ is in $T$.
Hence, each $T_j$ is connected and has no cycles, which immediately results in the following lemma.

\begin{lemma} \label{lem:only-trees}
For each $j\geq 0$, $T_j$ is a tree.
\qed
\end{lemma}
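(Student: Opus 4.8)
The plan is to proceed by induction on $j$. For the base case, $T_0$ is the tree constructed in line~\ref{ln:ST:init-T}, which by definition consists of a single vertex and no edges; this is trivially a tree (connected and acyclic).

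For the inductive step, suppose $T_j$ is a tree and consider the iteration of the \texttt{while} loop that transforms $T_j$ into $T_{j+1}$. By the condition in line~\ref{ln:ST:pick-edge}, the edge $\{u,v\}$ added to $T$ in line~\ref{ln:ST:extend-T} satisfies $u\notin V_{T_j}$ and $v\in V_{T_j}$. Hence $T_{j+1}$ is obtained from $T_j$ by attaching one new vertex $u$ to the existing vertex $v$ via a single edge. I would then verify the two defining properties of a tree separately. First, $T_{j+1}$ is connected: every vertex of $T_j$ is joined to $v$ by a path inside $T_j\subseteq T_{j+1}$ (inductive hypothesis), and $u$ is joined to $v$ by the newly added edge, so all vertices of $T_{j+1}$ lie in one connected component. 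Second, $T_{j+1}$ is acyclic: a cycle in $T_{j+1}$ would either lie entirely within $T_j$, contradicting the inductive hypothesis, or pass through $u$; the latter is impossible since $\deg_{T_{j+1}}(u)=1$ while every vertex lying on a cycle has degree at least $2$ in that cycle. Therefore $T_{j+1}$ is connected and acyclic, i.e., a tree, which closes the induction.

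I do not expect any genuine obstacle here; the only point needing a moment's care is that line~\ref{ln:ST:extend-T} is executed only when $u$ is outside and $v$ is inside the current tree, which is precisely the guard imposed by line~\ref{ln:ST:pick-edge}. Note that whether such an edge $\{u,v\}$ always exists while $V_G\neq V_{T_j}$ — that is, whether the loop keeps making progress — is a separate matter irrelevant to this lemma, which asserts only that every $T_j$ that does arise during the execution is a tree.
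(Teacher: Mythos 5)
Your proof is correct and follows essentially the same approach as the paper: the paper observes, slightly more tersely, that since line~\ref{ln:ST:pick-edge} guarantees $u\notin V_T$ and $v\in V_T$ whenever an edge is added, each $T_j$ stays connected and acyclic. Your explicit induction with the leaf-attachment argument is just a more detailed write-up of the same observation.
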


Having proved that $T$ `grows correctly', it remains to argue that it eventually becomes a spanning tree, i.e., the last tree $T_k$ in the sequence satisfies $V_{T_k}=V_G$.
We say that $(X,Y)$ is a \emph{partition} of $G$ if $X\cup Y=V_G$ and $X\cap Y=\emptyset$.
We have the following straightforward property.
\begin{lemma} \label{lem:partition}
If $(X,Y)$ is a partition of a $2$-edge-connected $G$, then there exist at least two different edges, each having one endpoint in $X$ and one in $Y$.
\qed
\end{lemma}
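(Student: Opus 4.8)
The plan is to obtain Lemma~\ref{lem:partition} directly from the standard characterization of $2$-edge-connectivity: a $2$-edge-connected graph is connected and has no bridge, i.e.\ for every edge $e$ the graph $G-e$ is still connected. I will also make explicit the hypothesis that is implicit in the statement, namely that both $X$ and $Y$ are nonempty (if one of them is empty there are no edges between them, so the claim forces this assumption). Writing $\partial(X,Y)$ for the set of edges of $G$ having one endpoint in $X$ and one in $Y$, the goal becomes showing $|\partial(X,Y)|\ge 2$.

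First I would note that $\partial(X,Y)\neq\emptyset$: since $G$ is $2$-edge-connected it is in particular connected, so for any $x\in X$ and $y\in Y$ there is an $x$--$y$ path in $G$, and along such a path some edge must cross from $X$ to $Y$. Fix one such edge $e=\{x',y'\}$ with $x'\in X$ and $y'\in Y$. Now suppose, for contradiction, that $e$ is the \emph{only} edge of $\partial(X,Y)$. Then $G-e$ has no edge between $X$ and $Y$, hence no path of $G-e$ can join a vertex of $X$ to a vertex of $Y$; since both sides are nonempty, $G-e$ is disconnected. Thus $e$ is a bridge of $G$, contradicting $2$-edge-connectivity. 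Therefore $|\partial(X,Y)|\ge 2$, which is precisely the assertion.

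There is no real obstacle here — this is the "straightforward property" the paper advertises — so the only points requiring care are bookkeeping: stating the nonemptiness of $X$ and $Y$, and making sure the working definition of $2$-edge-connectivity ("$G$ minus any single edge is connected", equivalently "$G$ is connected and bridgeless") is the one intended, including on degenerate small graphs. If one wishes to avoid the no-bridge phrasing altogether, the argument above already uses only the raw definition that $G$ minus one edge stays connected, so it transfers verbatim.
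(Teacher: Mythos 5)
Your proof is correct and is exactly the standard argument the paper leaves implicit (the lemma is stated with its proof omitted as a ``straightforward property''): connectivity yields one crossing edge, and a unique crossing edge would be a bridge, contradicting $2$-edge-connectivity. Your remark that $X$ and $Y$ must both be nonempty is a fair bookkeeping point; in the paper's only application, within Lemma~\ref{lem:finalT}, both sides of the partition are indeed nonempty, so nothing further is needed.
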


\begin{lemma} \label{lem:finalT}
The last tree obtained during the execution of Algorithm \ref{alg:mst-algorithm} is a spanning tree of $G$.
\end{lemma}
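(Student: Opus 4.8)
The plan is to argue by contradiction: suppose the algorithm terminates (or gets stuck) with a tree $T_k$ whose vertex set $V_{T_k}$ is a proper subset of $V_G$. Since the \texttt{while} loop only exits when $V_G = V_T$, the only way this can happen is that at some iteration there is \emph{no} pair $(u,v)\in\cL$ with $u\notin V_{T_k}$ and $v\in V_{T_k}$; equivalently, $\cL$ contains no directed pair crossing the cut in the ``outside-to-inside'' direction. So set $X = V_{T_k}$ and $Y = V_G\setminus V_{T_k}$; this is a partition of $G$ with both sides nonempty. By Lemma~\ref{lem:partition}, because $G$ is $2$-edge-connected there are at least two edges of $E_G$ with one endpoint in $X$ and one in $Y$ — call them $e$ and $e'$, say $e=\{x,y\}$ and $e'=\{x',y'\}$ with $x,x'\in X$ and $y,y'\in Y$.

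Now I would use Condition~\ref{it:edge-DFS2} of the edge DFS: every edge of $G$ appears (as $e(a_i)$ for some $i$) in $\cL$. In particular $e$ appears as some pair $a_i$, oriented either as $(x,y)$ or as $(y,x)$. If it is oriented $(y,x)$, i.e. from the $Y$-side into the $X$-side, then this pair witnesses $u=y\notin X=V_{T_k}$ and $v=x\in V_{T_k}$, so the loop could not have terminated — contradiction. Hence $e$ must be oriented $(x,y)$, and likewise $e'$ must be oriented $(x',y')$; that is, every cut edge between $X$ and $Y$ is directed by $\cL$ from $X$ to $Y$.

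The crux, then, is to derive a contradiction from the statement ``all of the (at least two) $X$–$Y$ cut edges are traversed by $\cL$ from $X$ into $Y$.'' The key observation I would exploit is that the starting vertex $u_1$ lies on exactly one side of the cut — say WLOG $u_1\in X$ (the case $u_1\in Y$ is symmetric, swapping the roles via which side $T_0$'s vertex is on; actually one must be a little careful, since $T_0$ is built around an arbitrary vertex, not necessarily $u_1$, so I should phrase the argument purely in terms of $\cL$). Consider the \emph{first} index $j$ at which $\cL$ traverses a cut edge; by the above, $a_j$ goes from $X$ to $Y$, so $u_1$ and all of $u_1,\dots,u_j$ up to that point were on the $X$-side and now $v_j\in Y$. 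From $v_j\in Y$, trace $\cL$ forward: whenever $\cL$ crosses at some vertex $v_i$, the next pair $a_{i+1}$ starts at $v_i$, so it stays on the same side unless it is a cut edge — but a cut edge out of $Y$ would have to be oriented $Y\to X$, contradicting our conclusion that every cut edge is oriented $X\to Y$. And whenever $\cL$ backtracks after $v_i$ with $X(v_i)=\emptyset$, by Condition~\ref{it:edge-DFS3} it backtracks to the most recent vertex $v_{j'}$ (with $j'<i$) still having an untraversed incident edge, and $a_{i+1}\in X(v_{j'})$; I would argue (by induction on the position, tracking which side the ``active'' vertex is on, together with the fact that once we are inside $Y$ every subsequent vertex that could be backtracked to is also in $Y$) that $v_{j'}\in Y$ too. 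Hence once $\cL$ enters $Y$ it never leaves, so every edge incident to a $Y$-vertex that gets traversed stays within $Y$ — but the cut edges $e,e'$ are incident to $Y$-vertices and, by Condition~\ref{it:edge-DFS2}, must eventually be traversed, and when the second one is traversed it is traversed from its $X$-endpoint, which means $\cL$ is at an $X$-vertex \emph{after} having been confined to $Y$ — contradiction. This ``$\cL$ cannot re-enter $X$ after leaving it'' invariant, carefully set up as an induction over the prefix length and made robust to the backtracking rule, is the main obstacle; once it is in place the contradiction is immediate, so $V_{T_k}=V_G$ and $T_k$ is a spanning tree.
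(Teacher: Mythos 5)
Your overall strategy is the same as the paper's (all cut edges must be oriented from $X$ to $Y$ in $\cL$, and then a ``once $\cL$ enters $Y$ it cannot return to $X$'' invariant yields a contradiction), but the one step you yourself flag as ``the main obstacle'' --- the backtracking case of the invariant --- is exactly where your argument has a genuine gap, and you do not supply the ingredient that closes it. When $\cL$ backtracks from an exhausted vertex of $Y$, Condition~\ref{it:edge-DFS3} sends it to the most recent \emph{previously visited} vertex that still has an untraversed incident edge; the candidates include all the $X$-vertices visited before the first crossing. The fact that the $Y$-vertices visited after the crossing are more recent only helps if at least one of \emph{them} still has an untraversed edge. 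Knowing that the second cut edge $e'=\{x',y'\}$ is untraversed does not give this: $y'$ may simply not have been visited yet, in which case it is not a legal backtrack target at all. Worse, since you applied Lemma~\ref{lem:partition} only to the cut $(X,Y)$, both guaranteed cut edges could be incident to a connected component of $G[Y]$ \emph{different} from the one $\cL$ entered at $v_j$; nothing in your argument then prevents $\cL$ from completely exhausting the entered component, backtracking legitimately to an $X$-vertex, continuing inside $X$, and entering the other component later from its $X$-endpoint --- a scenario fully consistent with ``all cut edges oriented $X\to Y$'' and with the edge-DFS rules, so no contradiction is reached.

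The paper closes this hole by working with the connected component $\widetilde{Y}$ of $G[Y]$ containing the entry vertex and applying Lemma~\ref{lem:partition} to the partition $(V_G\setminus\widetilde{Y},\widetilde{Y})$: $2$-edge-connectivity then yields a second edge between $X$ and that very component, and this edge is still untraversed at any backtracking moment (it is oriented from $X$, and under the invariant no item after the entry starts in $X$). Combined with the connectedness of $\widetilde{Y}$, this guarantees that some \emph{already visited} vertex of $\widetilde{Y}$ retains an untraversed incident edge (otherwise the whole component would be visited and exhausted, contradicting the untraversed second edge), so the backtrack target lies in $\widetilde{Y}$ and the invariant survives; the final contradiction is then that the second $X$--$\widetilde{Y}$ edge can never be traversed, violating Condition~\ref{it:edge-DFS2} (equivalently, your ``traversed from its $X$-endpoint'' contradiction). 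Your reduction to ``all cut edges point $X\to Y$'', your treatment of the starting vertex, and the crossing case are fine; to make the proof complete you must add the component refinement and the second application of Lemma~\ref{lem:partition} at the component level.
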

\begin{proof}
Let $n=|V_G|$.
We prove by induction on $j\in\{0,\ldots,n-2\}$ that there exists an item $(u,v)$ in $\cL$ such that $u\notin V_{T_j}$ and $v\in V_{T_j}$, and hence $T_{j+1}$ is computed by the algorithm, provided that the algorithm obtained $T_j$.
The base case of $j=0$ and the inductive step for $j>0$ are identical, hence pick any $j\in\{0,\ldots,n-2\}$.
Consider the partition $(X,Y)=(V_{T_j},V_G \setminus V_{T_j})$ of $G$.

Take $(u_i,v_i)\in\cL$ such that $u_i\in X$ and $v_i\in Y$.
If no such item in $\cL$ exists, then each edge $\{x,y\}$ with $x\in X$ and $y\in Y$ provides an item $(y,x)\in\cL$, and then Algorithm \ref{alg:mst-algorithm} can extend $T_j$ to obtain $T_{j+1}$ using this edge.
So suppose that the item $(u_i,v_i)$ exists in $\cL$ and let $i$ be the minimum such index.
Let $\widetilde{Y}\subseteq Y$ be such that $v_i\in \widetilde{Y}$ and $\widetilde{Y}$ is the maximum subset that induces a connected component in $G$.
By Lemma~\ref{lem:partition} applied for partition $(V_G\setminus\widetilde{Y},\widetilde{Y})$, there exists an edge $\{x,y\}$ different than $\{u_i,v_i\}$ such that $x\in V_G\setminus\widetilde{Y}$ and $y\in \widetilde{Y}$.
By the maximality of $\widetilde{Y}$, we must have $x\in X$.

Find the minimum $i'>i$ such that the edge $e=\{u_{i'},v_{i'}\}$ gives the item $(u_{i'},v_{i'})$ in $\cL$, such that one endpoint of $e$ is in $X$ and the other -- in $\widetilde{Y}$.
Due to the existence of $\{x,y\}$ such an $i'$ exists, although it may be the case that $e\neq\{x,y\}$ if many edges connecting $X$ and $\widetilde{Y}$ exist.
If $u_{i'}\in Y$ and $v_{i'}\in X$, then Algorithm \ref{alg:mst-algorithm} can add $e$ to $T_j$ in line~\ref{ln:ST:extend-T}, which proves that $T_{j+1}$ is obtained and completes the proof.
However, we note that we have not proved that $e$ is the edge added to $T_j$ to obtain $T_{j+1}$ but we only have that $e$ is a suitable candidate for this extension.

The case when $u_{i'}\in X$ and $v_{i'}\in \widetilde{Y}$ remains to be discussed. 
We prove by contradiction that this is not possible.
Consider the part of $\cL$ that follows the item $(u_i,v_i)$.
A simple iterative argument gives that, for this part, it holds that if the last vertex visited is in $\widetilde{Y}$, which initially holds because we start with $v_i\in\widetilde{Y}$, then we have three possibilities.
Either an item $(a,b)$ follows in $\cL$, where $a\in\widetilde{Y}$ and $b\in X$.
Or, an item $(a,b)$ follows in $\cL$, where $\{a,b\}\subseteq\widetilde{Y}$ and $\cL$ crosses $a$.
Or, $\cL$ backtracks to a vertex $a$.

In the first case, where informally speaking $\cL$ `crosses' from $\widetilde{Y}$ back to $X$, we have $\{a,b\}=\{u_{i'},v_{i'}\}$ by the minimality of $i'$.
But this means that $a=u_{i'}$ and $b=v_{i'}$, thus contradicting $u_{i'}\in X$ and $v_{i'}\in\widetilde{Y}$.
In this case, the induction takes us to the finish of the proof of the lemma.
For the second case, $b\in\widetilde{Y}$, which takes us to the next iteration of our argument.
In the last case $\cL$ backtracks.
Note that the existence of the edge $\{x,y\}$ implies that $\widetilde{Y}$ has at least one vertex that has an incident edge $e'$ not traversed by $\cL$ till this point.
Thus, by the definition of the edge DFS, i.e., Condition~\ref{it:edge-DFS3} with constraints for backtracking, $\cL$ backtracks to a vertex that belongs to $\widetilde{Y}$, which also takes us to the next step of the iterative reasoning.
Thus, this induction eventually finishes with a contradiction, as required.
\end{proof}

\begin{lemma} \label{lem:time}
Given any $2$-edge-connected graph $G=(V_G,E_G)$ as an input, the complexity of Algorithm \ref{alg:mst-algorithm} is $\bigo(|E_G|)$.
\end{lemma}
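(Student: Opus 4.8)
The plan is to exhibit a concrete implementation of Algorithm~\ref{alg:mst-algorithm} and show it runs in $\bigo(|E_G|)$ time; we may assume $G$ is given by adjacency lists. First, computing an edge DFS $\cL=(a_1,\dots,a_l)$ in line~\ref{ln:ST:compute-L} is a minor variant of iterative depth-first search: maintain for each $v$ a pointer into $E_G(v)$ and a stack of vertices implementing the backtracking rule of Condition~\ref{it:edge-DFS3}, pushing a vertex whenever it becomes the tail of the next traversed pair and popping while searching for a vertex with an untraversed incident edge. Each edge of $G$ is traversed exactly once, so $l=|E_G|$; the pointer of $v$ only ever advances past edges not yet traversed, so all pointers together move $\bigo(\sum_{v}\deg_G(v))=\bigo(|E_G|)$ times; and the number of pushes, hence pops, is $\bigo(l)=\bigo(|E_G|)$. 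Thus $\cL$ is produced in $\bigo(|V_G|+|E_G|)$ time, which is $\bigo(|E_G|)$ since a $2$-edge-connected graph is connected. While producing $\cL$ we also build, for every $v\in V_G$, the list $\inE{v}{\cL}$ of edges $e(a_i)$ with $v_i=v$ (the edges whose traversal in $\cL$ ends at $v$); this costs $\bigo(l)$ and $\sum_{v}|\inE{v}{\cL}|=l=|E_G|$.

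For the \texttt{while} loop, keep a Boolean array marking the vertices currently in $V_T$ and a worklist of vertices that already lie in $V_T$ but whose list $\inE{\cdot}{\cL}$ has not yet been scanned; initially the worklist holds the single vertex of $T_0$. Repeatedly extract a vertex $v$ from the worklist and, for each edge $\{w,v\}\in \inE{v}{\cL}$ --- so that its item in $\cL$ is $(w,v)$ --- if $w\notin V_T$ then add $\{w,v\}$ to $T$, mark $w$, and put $w$ on the worklist. Every edge added this way is a legitimate choice for line~\ref{ln:ST:pick-edge}, since at that moment $w\notin V_T$ and $v\in V_T$; hence the procedure performs a valid run of Algorithm~\ref{alg:mst-algorithm}, provided it does not halt with $V_T\neq V_G$. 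To rule that out, suppose the worklist empties while $V_T=X\subsetneq V_G$. By the inductive claim established in the proof of Lemma~\ref{lem:finalT}, there is an item $(w,v)\in\cL$ with $w\notin X$ and $v\in X$, whence $\{w,v\}\in\inE{v}{\cL}$. Since $v\in X=V_T$, the vertex $v$ was placed on the worklist at some earlier time and --- the worklist being empty now --- later extracted and processed; at that point the edge $\{w,v\}\in\inE{v}{\cL}$ was examined, and as $V_T$ only grows during the execution we still had $w\notin V_T$, so $w$ was marked and added to $V_T$ --- contradicting $w\notin X$. Hence the procedure terminates with $V_T=V_G$.

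Finally, the loop runs in $\bigo(|V_G|+|E_G|)$ time: each vertex enters and leaves the worklist at most once, and the work of processing $v$ is $\bigo(1+|\inE{v}{\cL}|)$, which sums to $\bigo(|V_G|+|E_G|)$. Together with the $\bigo(|E_G|)$ cost of building $\cL$ and the lists $\inE{\cdot}{\cL}$, the whole algorithm runs in $\bigo(|E_G|)$ time, as claimed. The one genuinely delicate point is the correctness argument of the previous paragraph --- that scanning each list $\inE{v}{\cL}$ only once still realizes a valid run of the nondeterministic Algorithm~\ref{alg:mst-algorithm} --- and this rests entirely on the monotonicity of $V_T$ together with the existence statement already proved inside Lemma~\ref{lem:finalT}; everything else is the routine bookkeeping of a depth-first search.
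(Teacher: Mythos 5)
Your proof is correct and rests on essentially the same idea as the paper's: amortize the cost of line~\ref{ln:ST:pick-edge} by associating with each vertex, at the moment it enters $T$, the items of $\cL$ that could serve as future candidates, so that every item is examined only $\bigo(1)$ times over the whole run. The paper implements this with one queue holding all pairs of $\cL$ incident to newly added vertices and discards stale pairs as they are dequeued, whereas you scan per-vertex incoming lists driven by a vertex worklist and, as a useful addition, spell out the monotonicity-plus-Lemma~\ref{lem:finalT} argument showing this lazy scanning never misses a valid pair, a point the paper leaves implicit.
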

\begin{proof}
The complexity of constructing an edge DFS $\cL$ is $\bigo(|E_G|)$.
To claim the linear running time it is enough to argue that the execution of line~\ref{ln:ST:pick-edge} can be done in constant time using a data structure whose total processing time is $\bigo(|E_G|)$.
This data structure is a queue $\cQ$ that is used in the following way.
It is initialized in line~\ref{ln:ST:init-T} where the tree~$T$ is set to contain a single vertex $x$.
The $\cQ$ is populated with all pairs from $\cL$ having $x$ as one of the endpoints.
Then, $\cQ$ is used during the execution of line~\ref{ln:ST:pick-edge} as follows.
First, repeat the following: remove the first element $(u,v)$ from $\cQ$ and if $(u,v)$ satisfies the condition in line~\ref{ln:ST:pick-edge}, then the desired pair is found.
Otherwise, dequeue $(u,v)$ and repeat with the next item in $\cQ$.
Finally, $\cQ$ is updated during the execution of line~\ref{ln:ST:extend-T} by enqueueing all pairs from $\cL$ having $v$ as one of the endpoints; denote all such pairs by $Z(v)$.

Note that enqueueing to $\cQ$ all items with one endpoint equal to a vertex $x$ for each vertex $x\in V_G$ occurs exactly once for each $x$.
Thus exactly $2|E_G|$ items are added to $\cQ$ in total.
Moreover, the operation of enqueueing them to $\cQ$ can be done in time $\bigo(|E_G(x)|)$ using pre-processing: once $\cL$ is computed, iterate over the elements of $\cL$ and for each item add it to $Z(v)$ if the item has $v$ as an endpoint.
\end{proof}

\section{Vertex degrees in $T$} \label{sec:degree}

For a vertex $u$ and an edge DFS $\cL=(a_1,\ldots,a_l)$, denote
\[\inE{u}{\cL} = \left\{ e(a_i) \colon a_i=(v_i,u) \right\},\]
\[\outE{u}{\cL} = \left\{ e(a_i) \colon a_i=(u,v_{i+1}) \right\}.\]
Each edge in $\inE{u}{\cL}$ is called \emph{incoming} and each edge in $\outE{u}{\cL}$ is \emph{outgoing}.
\begin{lemma} \label{lem:DFS-balanced}
For each $\cL$ and each vertex $u$ that is visited by $\cL$, $|\inE{u}{\cL}|\leq|\outE{u}{\cL}|+1$ if $\deg_G(u)$ is odd, and $|\inE{u}{\cL}|\leq|\outE{u}{\cL}|$ if $\deg_G(u)$ is even.
\end{lemma}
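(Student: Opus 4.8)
The plan is to track, for each vertex $u$ and each prefix $\cL_j=(a_1,\dots,a_j)$ of the edge DFS, the quantity $\delta_j(u)=|\inE{u}{\cL_j}|-|\outE{u}{\cL_j}|$, and to show that it stays within the right window throughout. First I would isolate the key invariant: at every time $j$, the only way $u$ can have a strictly positive $\delta_j(u)$ is that $\cL$ has \emph{entered} $u$ (via some incoming $a_i$) without yet having \emph{left} it, i.e.\ $u$ is on the current ``active path'' of the DFS; and if $u$ is on the active path its incoming/outgoing counts differ by exactly one (the last entry into $u$ is not yet matched by a departure). So $\delta_j(u)\in\{0,1\}$ for every $u$ and every $j$ that is \emph{not} the very last step when $u$ is final and about to be abandoned --- and I would make this precise by induction on $j$, examining the two cases of Condition~\ref{it:edge-DFS3}: either $\cL$ crosses $v_j$, which simultaneously adds one to $\inE{v_j}{}$ via $a_j$ and one to $\outE{v_j}{}$ via $a_{j+1}$ and so does not change $\delta(v_j)$ at all (it only bumps $\delta$ of the endpoints of $a_{j+1}$ in the usual way); or $\cL$ backtracks, in which case $v_j$ becomes final with $\delta(v_j)=0$ by the invariant, and the reactivated vertex $v_{j'}$ gets a new outgoing edge, decreasing $\delta(v_{j'})$ from $1$ to $0$.

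Next I would feed this invariant the global constraint from Condition~\ref{it:edge-DFS2}, namely $X_l(v)=\emptyset$ for every $v$, so that at the end of $\cL$ every incident edge of $u$ is either incoming or outgoing: $|\inE{u}{\cL}|+|\outE{u}{\cL}|=\deg_G(u)$. Combining this with $|\inE{u}{\cL}|-|\outE{u}{\cL}|=\delta_l(u)$ gives $|\inE{u}{\cL}|=\frac{\deg_G(u)+\delta_l(u)}{2}$, so in particular $\delta_l(u)$ must have the same parity as $\deg_G(u)$. If $\deg_G(u)$ is even, then $\delta_l(u)$ is even, but by the invariant $\delta_l(u)\in\{0,1\}$ (the final ``exception'' value is still at most $1$), so $\delta_l(u)=0$ and $|\inE{u}{\cL}|=|\outE{u}{\cL}|$. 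If $\deg_G(u)$ is odd then $\delta_l(u)$ is odd and in $\{-1,1\}$ up to the invariant; either way $|\inE{u}{\cL}|\le|\outE{u}{\cL}|+1$. The one subtlety about the starting vertex $u_1$ (which is never ``entered'') and about a vertex that is final at the last step of $\cL$ is absorbed here: for those, $\delta_l(u)$ could be $0$ rather than $1$, which only makes the desired inequalities easier.

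The main obstacle I expect is pinning down the invariant $\delta_j(u)\in\{0,1\}$ rigorously through the backtracking rule. The danger is a vertex that is crossed, later becomes the backtrack target $v_{j'}$ once, then again, then finally abandoned --- I need to be sure that each reactivation is preceded by a matching departure, i.e.\ that the DFS cannot ``enter'' $u$ twice in a row without leaving, and cannot ``leave'' twice in a row without entering. This is exactly what Condition~\ref{it:edge-DFS3} guarantees: a crossing step uses $v_j$ as both the head of $a_j$ and the tail of $a_{j+1}$, and a backtracking step to $v_{j'}$ requires $X(v_{j'})\neq\emptyset$ and makes $v_{j'}$ the tail of $a_{j+1}$, so departures are always ``charged'' against a previous arrival. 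I would phrase the induction hypothesis as: after step $j$, the set of vertices with $\delta_j(u)=1$ is precisely the set of tails of the suffix of ``still-open'' arrivals, listed in stack order, and $v_j$ (if not final) is its top. Verifying that both cases of Condition~\ref{it:edge-DFS3} preserve this stack description, and that Condition~\ref{it:edge-DFS2} empties the stack at the end, is the technical heart; everything after that is the parity bookkeeping above.
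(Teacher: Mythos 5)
Your overall skeleton (track the prefix balance $\delta_j(u)=|\inE{u}{\cL_j}|-|\outE{u}{\cL_j}|$, then use Condition~\ref{it:edge-DFS2} to get $|\inE{u}{\cL}|+|\outE{u}{\cL}|=\deg_G(u)$ and finish by parity) is sound, but the invariant you build it on is false, and the inductive step you sketch for backtracking is exactly where it breaks. You claim $\delta_j(u)\in\{0,1\}$, with the backtrack step ``decreasing $\delta(v_{j'})$ from $1$ to $0$.'' In this edge DFS, an arrival at a vertex that still has untraversed incident edges is immediately followed, at the very next step, by a departure from that same vertex (the crossing case of Condition~\ref{it:edge-DFS3}); so by the time the traversal backtracks to $v_{j'}$, that vertex's balance is typically $0$, and the backtrack departure drives it to $-1$ (repeated backtracks to the same vertex can push it lower still, and the starting vertex already has balance $-1$ after the first item). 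Concretely, glue two triangles $s,a,b$ and $a,c,d$ at $a$ and start at $s$: after $(s,a),(a,b),(b,s)$ the list is stuck at $s$ and must backtrack to $a$, whose balance is $0$ at that moment, and the next item $(a,c)$ makes it $-1$. For the same reason your stack picture (``the vertices with $\delta_j(u)=1$ are precisely the tails of the still-open arrivals'') is a vertex-DFS intuition that does not transfer to this edge DFS: at any time at most one vertex --- the head of the last item --- can carry an unmatched arrival, and vertices ``on the active path'' sit at balance $0$ (or below), not $1$.

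The gap is repairable, because your parity finish only ever uses an upper bound on $\delta_l(u)$, never a lower bound. The correct one-sided invariant is $\delta_j(u)\leq 1$ for every $u$ and every $j$: each arrival at $u$ is either immediately matched by the departure taken at the next step (crossing), or it is the last event at $u$ altogether --- this happens only when all edges incident to $u$ are already traversed, so it can occur at most once and is followed by no further arrivals or departures at $u$. This is precisely the pairing argument in the paper's proof of the lemma. With that replacement, your conclusion goes through verbatim: negative values of $\delta_l(u)$ are harmless, $\deg_G(u)$ odd gives $|\inE{u}{\cL}|\leq|\outE{u}{\cL}|+1$ directly, and $\deg_G(u)$ even forces $\delta_l(u)\leq 0$ by parity; indeed your explicit appeal to Condition~\ref{it:edge-DFS2} plus parity is a slightly crisper way to settle the even case than the paper's wording. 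So: right strategy and right endgame, but the two-sided invariant and the backtracking bookkeeping as stated would fail and must be weakened to the one-sided bound.
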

\begin{proof}
Let $\cL=(a_1,\ldots,a_l)$ and first assume that $u$ is not the starting vertex.
Whenever $\cL$ has an item $a_i=(u_i,v_{i})$ such that not all edges incident to $v_{i}$ have been traversed yet, i.e., $E_G(v_{i})\setminus\{e(a_1),\ldots,e(a_i)\}\neq\emptyset$, then $a_{i+1}=(v_{i},v_{i+1})$ due to Condition~\ref{it:edge-DFS3} in the definition of the edge DFS ($\cL$ crosses $v_i$).
Hence, $e(a_{i+1})\in\outE{v_{i}}{\cL}$.
This means that the above element in $\inE{v_{i}}{\cL}$ is uniquely paired with the corresponding element in $\outE{v_{i}}{\cL}$.
The other case is when $E_G(v_{i})\setminus\{e(a_1),\ldots,e(a_i)\}=\emptyset$.
Then, $\cL$ backtracks to $u_{i+1}$ according to Condition~\ref{it:edge-DFS3}.
This, however, may happen only once for each vertex $v_i$.

Hence, if the degree of a vertex $u$ is even, then either each element in $\inE{u}{\cL}$ is uniquely paired with an element in $\outE{u}{\cL}$, which completes the proof, 
or one element in $\inE{u}{\cL}$ is not paired, which occurs when $u$ is final in $\cL$.
Then, it follows from the parity of $\deg_G(u)$ that there is at least one edge in $\outE{u}{\cL}$ without a pair in $\inE{u}{\cL}$, which also implies the required bound.

The reasoning when $\deg_G(u)$ is odd is analogous, except for the argument that when $u$ is final, it may be the case that each element in $\inE{u}{\cL}$ is in fact paired with an element in $\outE{u}{\cL}$, which gives the worst case of $|\inE{u}{\cL}|=|\outE{u}{\cL}|+1$.

The remaining cases are when $u$ is the starting vertex for $\cL$ or when it is final in $\cL$.
We skip those as being analogous.
\end{proof}

Now, for a vertex $u \in V_T$, let us analyze its degree $\deg_T(u)$ in the final tree  $T=(V_T,E_T)$ returned by Algorithm~\ref{alg:mst-algorithm}.
First, we argue that only one outgoing edge, among those in $E_G(u)$, can belong to $E_T$.
Indeed, for an outgoing edge $\{u,v\}\in\outE{u}{\cL}$ to be added to $T$ it must hold $(u,v)\in\cL$ (see line~\ref{ln:ST:extend-T} in Algorithm~\ref{alg:mst-algorithm}).
However, this may occur only once because, after the first such event, $u\in V_T$, and subsequent outgoing edges do not result in adding another edge to $T$ due to the condition $u\notin V_T$ in line~\ref{ln:ST:pick-edge}.
Having in mind the worst case when all incoming edges incident to $u$ are added to $T$, we obtain:
\begin{equation} \label{eq:degT}
\deg_T(u)\leq |\inE{u}{\cL}|+1.
\end{equation}
If the degree of $u$ is even, then by Lemma~\ref{lem:DFS-balanced},
\[\deg_T(u) \leq \frac{1}{2}(|\inE{u}{\cL}|+|\outE{u}{\cL}|)+1 = \deg_G(u)/2+1.\]
If the degree of $u$ is odd, then again by~\eqref{eq:degT} and Lemma~\ref{lem:DFS-balanced},
\[\deg_T(u) \leq \frac{|\inE{u}{\cL}|}{2}+\frac{|\outE{u}{\cL}|+1}{2}+1.\]
Since for an odd $\deg_G(u)$ we have
\[\frac{|\inE{u}{\cL}|+|\outE{u}{\cL}|+1}{2} = \frac{\deg_G(u)+1}{2} = \left\lceil  \deg_G(u)/2 \right\rceil,\]
we have proved Theorem~\ref{thm:algorithm}.

\bibliographystyle{plain}
\bibliography{refs}

\end{document}